\documentclass{article}

\usepackage{epsf,hyperref}
\usepackage{amssymb,ComplexSystems}
\usepackage{amsmath,mathtools,graphicx,booktabs,xcolor}

\hypersetup{colorlinks=true,linkcolor=blue!60!black,citecolor=blue!60!black,urlcolor=blue!60!black}

\newcommand{\GF}{\mathbb{F}}
\newcommand{\Z}{\mathbb{Z}}
\newcommand{\R}{\mathbb{R}}
\newcommand{\xor}{\oplus}
\newcommand{\pop}{\operatorname{popcount}}
\newcommand{\Sens}{\operatorname{Sens}}

\newtheorem{remark}{Remark}

\begin{document}
	
	\title{Symmetric Nonlinear Cellular Automata\\
		as Algebraic References for Rule~30}
	
	\author{\authname{E. Chan-L\'opez}\\[2pt]
		\authadd{Divisi\'on Acad\'emica de Ciencias B\'asicas}\\
		\authadd{Universidad Ju\'arez Aut\'onoma de Tabasco}\\
		\authadd{Cunduac\'an, Tabasco, Mexico}
		\and
		\authname{A. Mart\'\i n-Ruiz}\\[2pt]
		\authadd{Instituto de Ciencias Nucleares}\\
		\authadd{Universidad Nacional Aut\'onoma de M\'exico}\\
		\authadd{Mexico City, Mexico}
	}
	
	\markboth{E. Chan-L\'opez and A. Mart\'\i n-Ruiz}
	{Symmetric Nonlinear CA as References for Rule 30}
	
	\maketitle
	
	\begin{abstract}
		A comparative algebraic framework for elementary cellular automata is developed, centered on the role of spatial symmetry. The primary object of study is Rule~22, the elementary cellular automaton with algebraic normal form $g(a,b,c)=a\xor b\xor c\xor abc$ over $\GF_2$, the simplest rule combining full $S_3$ symmetry with genuine nonlinearity. It is proved that the single-seed orbit of Rule~22 admits the closed form $Q_{2j}(x)=(x^2+x^{-2})^j$ and $Q_{2j+1}(x)=(x^{-1}+1+x)(x^2+x^{-2})^j$ for the row generating polynomials over $\GF_2$. Three consequences follow: an exact support-set cardinality $|S_m|=2^{\pop(\lfloor m/2\rfloor)}\cdot 3^{m\bmod 2}$, a two-step recursive construction of the support sets, and, in a formal continuum approximation, a parabolic reaction--diffusion equation $\partial_m u=u_{xx}+2u+u^3$. Rule~22 is then used as a symmetric reference for Rule~30. The symmetry-breaking deviation $\epsilon(m)=|S_m^{(30)}|-|S_m^{(22)}|$ between full-row support cardinalities is empirically consistent with a power law $m^b$ ($b\approx 1.11$), and is non-positive exactly at the Mersenne indices $m=2^k-1$. A mechanism for the apparent randomness of Rule~30's center column is identified through the left-permutive structure and asymmetric Boolean sensitivity profile.
	\end{abstract}
	\vspace{1em}
	\begin{keywords}
		cellular automata; elementary cellular automata; Rule 30; Rule 22; symmetry breaking; algebraic normal form; support sets; reaction--diffusion equations; Boolean sensitivity; computational irreducibility
	\end{keywords}

	\section{Introduction}
	\label{sec:intro}
	
	Elementary cellular automata (ECA) are the simplest class of one-dimensional cellular automata: a bi-infinite row of binary cells updates synchronously according to a local rule $g\colon\{0,1\}^3\to\{0,1\}$ that depends on a cell and its two nearest neighbors. Despite their minimal description length, the 256 possible rules exhibit the full spectrum of dynamical behavior, from trivial fixed points to patterns that support Turing-complete computation \cite{cook2004,wolfram2002}.
	
	Wolfram's classification of ECA into four classes---from simple convergence (Class~1) through nested self-similar patterns (Class~2) to apparent randomness (Class~3) and complex localized structures (Class~4)---remains a guiding framework for the study of discrete dynamical systems \cite{wolfram1984stat,wolfram2002}. Rule~30, a Class~3 automaton, has attracted particular attention because it generates seemingly random output from a deterministic rule applied to a single-seed initial condition. Three fundamental open questions about Rule~30 have been posed by Wolfram \cite{wolfram_prizes}: whether its center column is truly random, whether it is eventually periodic, and whether its individual cell values can be computed faster than by explicit simulation. These questions are intimately connected to the broader concept of computational irreducibility \cite{wolfram2002}.
	
	Rule~30 has also attracted algebraic attention, notably through the decomposition of its local rule into an algebraic normal form (ANF) over $\GF_2$ \cite{wolfram1984alg} and through computational discussions of its support structure on Mathematica Stack Exchange and Wolfram Community \cite{mse2026,chan2026,tigran2026}, which motivated the algebraic perspective adopted here.
	
	In this paper, a different approach is taken: \textbf{symmetric comparison}. Rule~22 is analyzed, an ECA that shares the same linear part $e_1=a\xor b\xor c$ as Rule~30 but differs in its nonlinear correction---the symmetric cubic $abc$ versus the asymmetric quadratic $bc$. This single algebraic difference has profound consequences:
	
	\begin{itemize}
		\item Rule~22 admits an exact closed form for its single-seed orbit, from which a cardinality formula, a recursive construction, and a parabolic continuum approximation follow---all open for Rule~30.
		\item The deviation between the two rules quantifies the effect of symmetry breaking and is empirically consistent with $\epsilon(m)\sim m^{1.11}$.
		\item The mechanism underlying Rule~30's apparent randomness is traced to the left-permutive decomposition $g=a\xor h(b,c)$ and its asymmetric sensitivity profile, which is the discrete counterpart of the first-order transport term in the continuum approximation.
	\end{itemize}
	
	The study of Boolean functions over $\GF_2$ and their influence on dynamical properties has a long history in the theory of cellular automata \cite{wolfram1984alg}, symbolic dynamics \cite{lind_marcus}, and the analysis of pseudorandom sequences \cite{wolfram2002}. The present work connects these perspectives through the lens of spatial symmetry and its algebraic consequences.

	\section{Algebraic Normal Form and Symmetry}
	\label{sec:anf}
	
	Rule~22 has Wolfram code $22=(00010110)_2$. By M\"obius inversion on the Boolean lattice $\{0,1\}^3$ (see, for example, \cite{wolfram1984alg,crama_hammer}), the ANF over $\GF_2$ is computed:
	\begin{equation}
		\label{eq:anf22}
		g_{22}(a,b,c) = a\xor b\xor c\xor abc.
	\end{equation}
	This is the sum of the first and third elementary symmetric polynomials: $g_{22}=e_1\xor e_3$. The output is~1 if and only if exactly one input is~1.
	
	\begin{definition}
		\label{def:permutive}
		A Boolean function $g\colon\{0,1\}^3\to\{0,1\}$ is \textbf{left-permutive} if, for every fixed $(b,c)$, the map $a\mapsto g(a,b,c)$ is a bijection; \textbf{center-} and \textbf{right-permutive} are defined analogously in the variables $b$ and $c$. It has \textbf{full $S_3$ symmetry} if $g(a_{\sigma(1)},a_{\sigma(2)},a_{\sigma(3)})=g(a_1,a_2,a_3)$ for all $\sigma\in S_3$. A rule is \textbf{quiescent} if $g(0,0,0)=0$.
	\end{definition}
	
	Rules~22, 30 and~150 are quiescent; Rule~135 is not. Quiescence is what makes the single-seed support sets of Section~\ref{sec:support} finite, and it is assumed wherever support sets are used.
	
	\begin{proposition}
		\label{prop:symmetry}
		Rule~22 has full $S_3$ symmetry. Rule~30, with ANF $g_{30}=a\xor b\xor c\xor bc$, is left-permutive but lacks $S_3$ symmetry.
	\end{proposition}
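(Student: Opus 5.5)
The plan is to verify each claim directly from the algebraic normal forms. Since everything reduces to finite checks on $\{0,1\}^3$, no earlier result beyond the ANF \eqref{eq:anf22} is needed, except a confirmation of the ANF of Rule~30.

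\textbf{Symmetry of Rule~22.} By \eqref{eq:anf22}, $g_{22}=e_1\xor e_3$, where $e_1=a\xor b\xor c$ and $e_3=abc$ are elementary symmetric polynomials over $\GF_2$. Each is invariant under every permutation of the variables, so their sum is as well. Equivalently, the output of $g_{22}$ depends only on the number of ones among its inputs: it equals $1$ exactly when that number is one. It therefore suffices to check invariance under the transpositions $(a\,b)$ and $(b\,c)$, which generate $S_3$.

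\textbf{ANF of Rule~30.} Write $30=(00011110)_2$, so the outputs on $111,110,\dots,000$ are $0,0,0,1,1,1,1,0$. Apply M\"obius inversion exactly as for Rule~22, or simply check the candidate $a\xor b\xor c\xor bc=a\xor(b\lor c)$ against the table. For $a=0$ it gives $b\lor c$, matching $0,1,1,1$ on $000,001,010,011$. For $a=1$ it gives the complement, matching $1,0,0,0$ on $100,101,110,111$. Uniqueness of the ANF then fixes $g_{30}$.

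\textbf{Left-permutivity and failure of symmetry.} From $g_{30}=a\xor h(b,c)$ with $h=b\xor c\xor bc$, the map $a\mapsto g_{30}(a,b,c)$ is translation by the constant $h(b,c)$ in $\GF_2$, hence a bijection for every fixed $(b,c)$. To show $S_3$ symmetry fails, one explicit permutation suffices. Take the transposition $(a\,b)$ with input $(a,b,c)=(1,0,1)$: then $g_{30}(1,0,1)=1\xor0\xor1\xor0=0$, while the swapped input gives $g_{30}(0,1,1)=0\xor1\xor1\xor1=1$. Alternatively, note that the ANF is unique and $bc$ is not invariant under $(a\,b)$, since it maps to $ac$.

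\textbf{Main obstacle.} There is no genuine obstacle. The only step needing care is getting the bit ordering of the Wolfram code right when deriving the ANF of Rule~30, which the explicit check against the truth table settles.
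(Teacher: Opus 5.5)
Your proof is correct and is essentially the paper's argument: $S_3$-invariance of $e_1$ and $e_3$ for Rule~22, left-permutivity from $g_{30}=a\xor h(b,c)$, and failure of $S_3$-invariance because the monomial $bc$ is moved by a transposition. The paper uses $a\leftrightarrow c$, which sends $bc$ to $ab$; you use $(a\,b)$ and also give the explicit witness $g_{30}(1,0,1)=0\neq 1=g_{30}(0,1,1)$, and your truth-table check of the Rule~30 ANF is a harmless extra that the paper takes as given.
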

	
	\proof Both $e_1$ and $e_3$ are symmetric polynomials, so $g_{22}$ is $S_3$-invariant.
	
	For Rule~30, left-permutivity is immediate from the decomposition $g_{30}=a\xor h(b,c)$ with $h(b,c)=b\xor c\xor bc$, since $a\mapsto a\xor h$ is a bijection for every fixed $(b,c)$. The asymmetry is evident directly from the ANF: the polynomial contains the monomial $bc$ but not $ab$. Since these are exchanged by the transposition $a\leftrightarrow c$, the rule fails to be invariant under the action of $S_3$.
	\endproof
	
	\begin{proposition}
		\label{prop:sym_perm_dichotomy}
		Among the 256 elementary cellular automata, exactly two are simultaneously $S_3$-symmetric and left-permutive: Rule~150, with ANF $g_{150}=a\xor b\xor c$, and its output complement Rule~105, with ANF $g_{105}=1\xor a\xor b\xor c$. In particular, every nonlinear $S_3$-symmetric ECA fails to be permutive in any variable.
	\end{proposition}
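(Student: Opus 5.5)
The plan is to describe the $S_3$-symmetric ECA completely, then impose left-permutivity and see which ones survive. By the fundamental theorem on symmetric polynomials over $\GF_2$, together with the uniqueness of the ANF, a Boolean function of $(a,b,c)$ is $S_3$-invariant exactly when its ANF is an $\GF_2$-linear combination of the elementary symmetric polynomials $e_0=1$, $e_1=a\xor b\xor c$, $e_2=ab\xor bc\xor ca$ and $e_3=abc$. To justify this, note that $S_3$ permutes the monomials of each degree transitively. Invariance of $g$ together with uniqueness of the ANF then forces all monomials of equal degree to have equal coefficients. Equivalently, a symmetric $g$ depends only on the weight $a+b+c\in\{0,1,2,3\}$, and there are exactly $2^4=16$ such functions. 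This gives
\[
g=\lambda_0\xor\lambda_1e_1\xor\lambda_2e_2\xor\lambda_3e_3,\qquad \lambda_i\in\GF_2 .
\]

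Next we characterize left-permutivity through the ANF. Write $g=a\,\partial_a g\xor r(b,c)$, where $\partial_a g(b,c)=g(1,b,c)\xor g(0,b,c)$. The map $a\mapsto g(a,b,c)$ is a bijection exactly when $\partial_a g(b,c)=1$, so $g$ is left-permutive if and only if $\partial_a g\equiv 1$ as a polynomial. For the symmetric family above, the derivative is
\[
\partial_a g=\lambda_1\xor\lambda_2(b\xor c)\xor\lambda_3\,bc .
\]
By uniqueness of the ANF in $(b,c)$, this equals $1$ exactly when $\lambda_1=1$ and $\lambda_2=\lambda_3=0$, with $\lambda_0$ left free. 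The two surviving functions are $e_1$ and $1\xor e_1$.

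It remains to identify their Wolfram codes. The function $a\xor b\xor c$ has truth table $10010110$ read from $111$ down to $000$, which is $150$. Its complement has truth table $01101001$, which is $105$.

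For the final sentence of the statement, note that a symmetric $g$ is permutive in $b$ or in $c$ if and only if it is permutive in $a$, because a transposition of variables carries one condition to the other. Hence a symmetric rule permutive in any variable must be Rule~150 or Rule~105, both of which are affine. Any nonlinear symmetric rule, meaning one with $\lambda_2$ or $\lambda_3$ nonzero, such as $g_{22}$ from \eqref{eq:anf22}, is therefore permutive in no variable.

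The only step requiring care is the first, the reduction of symmetric Boolean functions to combinations of the $e_i$. It is cleanest to argue via uniqueness of the ANF and the transitivity of $S_3$ on monomials of each fixed degree, rather than quoting the general symmetric-function theorem. The rest is a finite check.
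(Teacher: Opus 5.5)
Your proof is correct. It follows the same overall plan as the paper: parametrize the 16 symmetric rules by four bits, impose left-permutivity, and transfer the conclusion to the other variables by a transposition. The difference is the coordinate system.

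\begin{itemize}
\item The paper works with the truth-table values $o_w$ on the weight classes and reads left-permutivity as the alternation $o_0\neq o_1\neq o_2\neq o_3$.
\item You work with the ANF coefficients $\lambda_i$ of the elementary symmetric polynomials and read left-permutivity as $\partial_a g\equiv 1$.
\end{itemize}

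The two are related by $o_w=\sum_i\binom{w}{i}\lambda_i \bmod 2$. Your derivative evaluated at $(b,c)$ of weight $0,1,2$ gives $\lambda_1$, $\lambda_1\xor\lambda_2$, $\lambda_1\xor\lambda_3$. These are exactly $o_0\xor o_1$, $o_1\xor o_2$, $o_2\xor o_3$, so the two finite checks are the same linear system written in different bases.

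Your route makes the ``in particular'' clause intrinsic. Permutivity visibly annihilates exactly the nonlinear coefficients $\lambda_2,\lambda_3$. In the paper, by contrast, this is an after-the-fact observation that both survivors happen to be affine. Your criterion $\partial_a g\equiv 1$ is also the same as the decomposition $g=a\xor h(b,c)$ that the paper uses for Rule~30.

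The paper's route is more economical. It needs no ANF uniqueness, which you invoke twice: once for the symmetric reduction and once to pass from $\partial_a g\equiv 1$ to coefficient conditions. And because the Wolfram code is just the truth table, the codes $150$ and $105$ follow immediately from $(0,1,0,1)$ and $(1,0,1,0)$, without your separate translation step.
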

	
	\proof An $S_3$-invariant function $g\colon\{0,1\}^3\to\{0,1\}$ is constant on each Hamming-weight orbit, hence determined by the four values $o_w=g(a,b,c)$ for any $(a,b,c)$ with $a+b+c=w$, $w\in\{0,1,2,3\}$; there are therefore exactly $2^4=16$ such rules. Left-permutivity requires $g(0,b,c)\neq g(1,b,c)$ for every $(b,c)\in\{0,1\}^2$. Taking $(b,c)$ of weight $0,1,2$ in turn, this translates into $o_0\neq o_1$, $o_1\neq o_2$, $o_2\neq o_3$. Over $\GF_2$ this forces alternation, leaving only
	\[
	(o_0,o_1,o_2,o_3)\in\big\{(0,1,0,1),\,(1,0,1,0)\big\},
	\]
	corresponding to Rules~150 and~105 respectively. By $S_3$ symmetry the same alternation condition is equivalent to right- and center-permutivity, so the three permutivities coincide on the $S_3$-symmetric class. Both Rules~150 and~105 are affine over $\GF_2$, so any $S_3$-symmetric ECA whose ANF contains a nonlinear monomial is non-permutive in every coordinate. \endproof
	
	\begin{remark}
		Of the 16 $S_3$-symmetric ECA, four are affine, namely $g\in\{0,\,1,\,e_1,\,1\xor e_1\}$ (Rules~0, 255, 150 and~105); the remaining 12 are nonlinear and hence, by Proposition~\ref{prop:sym_perm_dichotomy}, non-permutive. Rule~22 illustrates this: the $S_3$-invariant ANF $g_{22}=e_1\xor e_3$ has orbit values $(o_0,o_1,o_2,o_3)=(0,1,0,0)$, which fails the alternation condition at $w=2\to 3$. Concretely, $g_{22}(0,1,1)=g_{22}(1,1,1)=0$, so $a\mapsto g_{22}(a,1,1)$ is constant. Within the $S_3$-symmetric class, permutivity is available only at the cost of linearity. Symmetry and permutivity are therefore independent structural properties, and the tractable behavior analyzed in this work is driven by $S_3$ symmetry alone.
	\end{remark}
	
	The key consequence for the continuum approximation (Section~\ref{sec:pde}) is that $S_3$ symmetry forces the expansion to be even in the spatial variable, eliminating the first-order transport term and converting the limiting equation from hyperbolic to parabolic type \cite{smoller,murray}.

	\section{Support Sets and the Closed Form}
	\label{sec:support}
	
	Throughout this section $\eta_m$ denotes the Rule~22 configuration at time $m$ evolved from the single-seed initial condition $\eta_0=\delta_0$.
	
	\begin{definition}
		\label{def:support}
		The \textbf{support set} at time $m$ is the full-row support
		\[
		S_m=\{r\in\Z:\eta_m(r)=1\},
		\]
		and the \textbf{right-half support set} is $S_m^{+}=S_m\cap\Z_{\geq 0}$. The associated generating polynomials are
		\[
		Q_m(x)=\sum_{r\in S_m}x^{r}\in\GF_2[x,x^{-1}],
		\qquad
		P_m(x)=\sum_{r\in S_m^{+}}x^{r}\in\GF_2[x].
		\]
	\end{definition}
	
	All cardinality statements below refer to the full-row set $S_m$; the right-half set is used only in Section~\ref{sec:genpoly}, where a genuine polynomial is required. The relation between the two is recorded in Corollary~\ref{cor:half}.
	
	\begin{figure}
		\centerline{\includegraphics[width=0.95\textwidth]{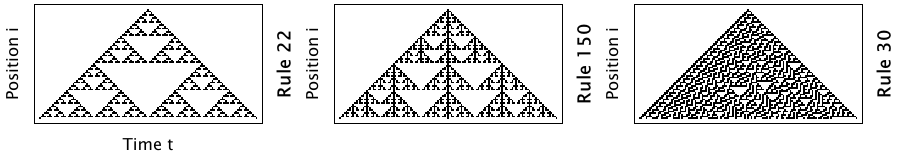}}
		\caption{Spatio-temporal evolution from a single seed for 64 generations. Rule~22 (left) produces a modified Sierpi\'nski triangle with clusters of three consecutive active cells. Rule~150 (center) produces the classical Sierpi\'nski triangle. Rule~30 (right) produces the well-known irregular pattern. The bilateral symmetry of Rules~22 and~150 contrasts with Rule~30's asymmetry.}
		\label{fig:comparison}
	\end{figure}
	
	The whole of the Rule~22 analysis rests on a single structural fact: at even times the active cells are widely separated, so that the cubic term $abc$ cannot fire and the rule acts additively.
	
	\begin{theorem}
		\label{thm:closedform}
		For all $j\geq 0$,
		\begin{equation}
			\label{eq:closedform}
			Q_{2j}(x)=\big(x^{2}+x^{-2}\big)^{j},
			\qquad
			Q_{2j+1}(x)=\big(x^{-1}+1+x\big)\big(x^{2}+x^{-2}\big)^{j},
		\end{equation}
		with all products taken in $\GF_2[x,x^{-1}]$. Moreover $S_{2j}\subseteq 2j+4\Z$, so that any two distinct elements of $S_{2j}$ differ by at least~4.
	\end{theorem}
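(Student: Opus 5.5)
We argue by induction on $j$, carrying along the stronger invariant that $S_{2j}\subseteq 2j+4\Z$; this is exactly what allows the additive Rule~150 description to be used. The base case $j=0$ is immediate: $Q_0=1$, and $S_0=\{0\}\subseteq 4\Z$. The key observation is that, over $\GF_2$, $g_{22}=e_1\xor e_3$ differs from the Rule~150 map $e_1$ only when $a=b=c=1$. So on any configuration with no three consecutive active cells, one step of Rule~22 agrees with one step of Rule~150. In generating-polynomial form that step is multiplication by $x^{-1}+1+x$; the sign convention on the exponents does not matter, since this factor is symmetric.

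\textbf{Step 1 (even to odd).} Assume $Q_{2j}=(x^2+x^{-2})^j$ and $S_{2j}\subseteq 2j+4\Z$. Distinct active cells at time $2j$ are then at distance at least $4$, so no window $(r-1,r,r+1)$ contains three $1$'s, and the cubic term never fires. Hence $Q_{2j+1}=(x^{-1}+1+x)Q_{2j}$, which is the claimed odd formula. Because the blocks $\{s-1,s,s+1\}$ for $s\in S_{2j}$ are disjoint, there is no cancellation. Thus $S_{2j+1}$ is a disjoint union of runs of exactly three consecutive cells centred at points of $S_{2j}$, and distinct runs are separated by gaps of at least one empty cell.

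\textbf{Step 2 (odd to even).} This is the main obstacle, because here $abc$ does fire, namely at the centre of each triple. I would compute directly instead of using the additive form. Write the Rule~150 image as
\[
(x^{-1}+1+x)^2Q_{2j}=(x^{-2}+1+x^2)Q_{2j},
\]
using the Frobenius identity in characteristic $2$. Then subtract the correction $\sum_r x^r$, taken over all $r$ whose window $(r-1,r,r+1)$ is entirely active at time $2j+1$. By Step 1 these $r$ are exactly the centres $s\in S_{2j}$, since runs have length exactly $3$ and gaps prevent longer all-ones windows. The correction is therefore $Q_{2j}$ itself, and
\[
Q_{2j+2}=(x^{-2}+1+x^2)Q_{2j}+Q_{2j}=(x^2+x^{-2})^{j+1}.
\]
It is worth checking this against a single triple $\{s-1,s,s+1\}$: Rule~22 outputs ones exactly at $s\pm2$, since $s\pm1$ see two ones, $s$ sees three, and $s\pm2$ see one. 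Overlaps between neighbouring triples' outputs cancel mod $2$, which is consistent with the linear formula above.

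\textbf{Step 3 (support invariant).} Every monomial of $(x^2+x^{-2})^{j+1}$ has exponent $2(j+1)-4k$ for some $0\le k\le j+1$. Cancellation over $\GF_2$ only removes terms, so $S_{2j+2}\subseteq 2(j+1)+4\Z$. This closes the induction, and the statement about minimal separation $4$ follows.
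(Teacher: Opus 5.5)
Your proof is correct. The odd step matches the paper's: sparsity at even times means the cubic term cannot fire, and the disjoint triples give no cancellation.

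The even step is where you take a genuinely different route. The paper handles $2j+1\to 2j+2$ by a local window count. It shows that $|W_i\cap S_{2j+1}|$ is $2$ or $3$ for $i$ inside a block, equals $1+[\,c\pm 4\in S_{2j}\,]$ for $i=c\pm 2$, and is $0$ elsewhere. It then reassembles this parity condition as the symmetric difference encoded by $(x^{2}+x^{-2})Q_{2j}$. You instead keep the decomposition $g_{22}=e_1\xor e_3$ global, even though the configuration at time $2j+1$ is no longer sparse. The linear part contributes $(x^{-1}+1+x)^2Q_{2j}=(x^{-2}+1+x^{2})Q_{2j}$ by Frobenius. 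The cubic part contributes the indicator of all-ones windows. Since $S_{2j+1}$ consists of maximal runs of exactly three cells separated by gaps, that indicator is exactly $Q_{2j}$.

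Your route buys brevity. The collisions between neighbouring blocks (the paper's case $c\pm 4\in S_{2j}$, and its side remark that $c\pm 2$ never lies in a block) are absorbed automatically by $\GF_2$-linearity. It also isolates the role of the nonlinear monomial cleanly: it cancels precisely the middle term of $x^{-2}+1+x^{2}$. This framing transfers more readily to other rules of the form $e_1\xor(\text{nonlinear part})$, where one need only identify the correction polynomial on the relevant configurations. The paper's cellwise count is more elementary. It gives an explicit picture of $\eta_{2j+2}$: each block emits at $c\pm 2$, and emissions from centres at distance $4$ annihilate. It does this without applying the additive description outside the sparse regime.
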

	
	\proof Induction on $j$. For $j=0$, $Q_0=1$ and $S_0=\{0\}\subseteq 4\Z$.
	
	Assume $Q_{2j}=(x^2+x^{-2})^j$. Expanding, its exponents are $2(2i-j)$ for those $i\in\{0,\dots,j\}$ with $\binom{j}{i}$ odd; each such exponent is congruent to $-2j\equiv 2j\pmod 4$. Hence $S_{2j}\subseteq 2j+4\Z$ and distinct elements differ by at least~4.
	
	\emph{Odd step.} By \eqref{eq:anf22}, $g_{22}(a,b,c)=1$ if and only if $a+b+c=1$. A window $W_i=\{i-1,i,i+1\}$ has width~3, while distinct elements of $S_{2j}$ are at distance at least~4; therefore $|W_i\cap S_{2j}|\leq 1$ for every $i$. Consequently $\eta_{2j+1}(i)=1$ if and only if $W_i$ meets $S_{2j}$, that is, if and only if $i\in\{c-1,c,c+1\}$ for some $c\in S_{2j}$. The triples are pairwise disjoint, so no cancellation occurs over $\GF_2$ and
	\[
	Q_{2j+1}=(x^{-1}+1+x)\,Q_{2j},\qquad |S_{2j+1}|=3\,|S_{2j}|.
	\]
	
	\emph{Even step.} Write $S_{2j+1}=\bigsqcup_{c\in S_{2j}}B_c$ with $B_c=\{c-1,c,c+1\}$. Fix $i$ and count $|W_i\cap S_{2j+1}|$.
	
	If $i\in B_c$ then, because the nearest other block lies at distance at least~4, $W_i$ meets $B_c$ only, and $|W_i\cap B_c|=2$ for $i=c\pm 1$ and $=3$ for $i=c$; in either case the count differs from~1 and $\eta_{2j+2}(i)=0$.
	
	If $i=c+2$ for some $c\in S_{2j}$, then $W_i=\{c+1,c+2,c+3\}$. Here $c+1\in B_c$; the cell $c+2$ lies in no block; and $c+3\in B_{c+4}$ precisely when $c+4\in S_{2j}$. Hence the count equals $1+[\,c+4\in S_{2j}\,]$. Symmetrically, for $i=c-2$ the count equals $1+[\,c-4\in S_{2j}\,]$. Note that $i=c\pm 2$ can never lie inside a block, since that would require another element of $S_{2j}$ within distance~3 of~$c$.
	
	For all remaining $i$, $W_i\cap S_{2j+1}=\emptyset$.
	
	Therefore $\eta_{2j+2}(i)=1$ if and only if $i$ is of the form $c+2$ or $c-2$ for an odd number of $c\in S_{2j}$. Since $c-2=c'+2$ exactly when $c=c'+4$, this is precisely the symmetric difference over $\GF_2$, that is,
	\[
	Q_{2j+2}=(x^{2}+x^{-2})\,Q_{2j}=(x^{2}+x^{-2})^{j+1},
	\]
	which also advances the congruence class from $2j$ to $2j+2$ modulo~4. \endproof
	
	\begin{corollary}
		\label{cor:cardinality}
		For all $m\geq 0$,
		\begin{equation}
			\label{eq:cardinality}
			|S_m| = 2^{\,\pop(\lfloor m/2\rfloor)}\cdot 3^{\,m\bmod 2},
		\end{equation}
		where $\pop(k)$ denotes the number of 1-bits in the binary representation of~$k$.
	\end{corollary}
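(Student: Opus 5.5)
The plan is to read the cardinality off the closed form of Theorem~\ref{thm:closedform}. Since $|S_m|$ is the number of monomials of $Q_m$ with nonzero coefficient in $\GF_2[x,x^{-1}]$, it suffices to count the nonzero terms of $(x^2+x^{-2})^j$ and of $(x^{-1}+1+x)(x^2+x^{-2})^j$.

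\emph{Even case $m=2j$.} I would expand $(x^2+x^{-2})^j=\sum_{i=0}^{j}\binom{j}{i}x^{2(2i-j)}$ over $\GF_2$. The exponents $2(2i-j)$ are pairwise distinct for distinct $i$, so no two binomial terms merge. Hence the number of surviving monomials equals $\#\{i:\binom{j}{i}\text{ odd}\}$. By Lucas's theorem, $\binom{j}{i}$ is odd exactly when the binary digits of $i$ are dominated by those of $j$. There are $2^{\pop(j)}$ such $i$. With $j=\lfloor m/2\rfloor$ and $m\bmod 2=0$, this gives the formula. Alternatively, one can avoid citing Lucas: use the Frobenius identity $(x^2+x^{-2})^{2^k}=x^{2^{k+1}}+x^{-2^{k+1}}$ and factor $(x^2+x^{-2})^j=\prod_{k:\,\text{bit }k\text{ of }j=1}(x^{2^{k+1}}+x^{-2^{k+1}})$. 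Expanding this product gives $2^{\pop(j)}$ monomials $x^{\sum\pm 2^{k+1}}$. Their exponents are distinct because signed sums of distinct powers of two with different sign patterns differ: comparing the lowest position where two sign patterns disagree gives a difference of $\pm 2^{k+2}\cdot(\text{odd})\neq 0$ up to higher terms.

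\emph{Odd case $m=2j+1$.} Theorem~\ref{thm:closedform} already records $|S_{2j+1}|=3|S_{2j}|$. The three-element blocks $\{c-1,c,c+1\}$, $c\in S_{2j}$, are pairwise disjoint because elements of $S_{2j}$ are at distance at least~4. So multiplying by $x^{-1}+1+x$ produces no cancellation, and $|S_{2j+1}|=3\cdot 2^{\pop(j)}$, with $j=\lfloor m/2\rfloor$ and $m\bmod 2=1$.

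The only step needing care is the distinctness of exponents in the even case, which guarantees that no $\GF_2$ cancellation occurs beyond what the odd-binomial count captures. With the direct binomial expansion this is immediate, since the exponent $2(2i-j)$ is injective in $i$. So the main input is really Lucas's theorem (or the Frobenius factorization), which I would cite as standard.
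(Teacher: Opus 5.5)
Your proposal is correct and follows the paper's own proof: count the odd binomial coefficients $\binom{j}{i}$ via Lucas' theorem (the paper also cites Kummer's carry criterion) to get $2^{\pop(j)}$ for $m=2j$, then use the cancellation-free odd step of Theorem~\ref{thm:closedform} to multiply by~3. You are more explicit than the paper in noting that the exponents $2(2i-j)$ are injective in $i$, and your alternative argument via the Frobenius factorization is also valid.
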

	
	\proof By Theorem~\ref{thm:closedform}, $|S_{2j}|$ equals the number of $i$ with $\binom{j}{i}$ odd, which by Kummer's carry criterion---equivalently, Lucas' theorem modulo~2 \cite{lucas1878}---equals $2^{\pop(j)}$. The odd step multiplies the cardinality by~3 without cancellation, so $|S_{2j+1}|=3\cdot 2^{\pop(j)}$. Setting $j=\lfloor m/2\rfloor$ gives \eqref{eq:cardinality}. \endproof
	
	The formula has a multiplicative structure indexed by the binary digits of~$m$: each bit at position $k\geq 1$ contributes a factor of~2 when set, while the least significant bit contributes a factor of~3. Corollary~\ref{cor:cardinality} makes precise the sense in which this is a Lucas correspondence: the factor $2^{\pop(\lfloor m/2\rfloor)}$ is exactly the number of odd binomial coefficients in row $\lfloor m/2\rfloor$ of Pascal's triangle, while the alternating base factor~3 comes from the trinomial multiplier $x^{-1}+1+x$ of the odd step.
	
	\begin{corollary}
		\label{cor:half}
		$S_m=-S_m$ for all $m$, and $0\notin S_m$ for $m\geq 2$. Consequently $|S_m^{+}|=\tfrac12|S_m|$ for $m\geq 2$, while $S_0^{+}=\{0\}$ and $S_1^{+}=\{0,1\}$.
	\end{corollary}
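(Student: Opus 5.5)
The plan is to read everything off the closed form of Theorem~\ref{thm:closedform}, working with the Laurent polynomials $Q_m\in\GF_2[x,x^{-1}]$ and the involution $x\mapsto x^{-1}$.

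\emph{Symmetry.} The reflection $r\mapsto -r$ corresponds to the substitution $x\mapsto x^{-1}$, which is a ring automorphism of $\GF_2[x,x^{-1}]$. Both factors $x^2+x^{-2}$ and $x^{-1}+1+x$ are fixed by it, so every product of them is fixed as well. Hence $Q_m(x^{-1})=Q_m(x)$ for all $m$, which is exactly $S_m=-S_m$. One could also argue dynamically, since $g_{22}(a,b,c)=g_{22}(c,b,a)$ and $\delta_0$ is symmetric, but the algebraic route is shorter given \eqref{eq:closedform}.

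\emph{Absence of the origin.} Split by parity.
\begin{itemize}
\item For $m=2j$ with $j\ge1$, Theorem~\ref{thm:closedform} gives $S_{2j}\subseteq 2j+4\Z$. When $j$ is odd, $0$ is not congruent to $2j\equiv 2\pmod 4$, so $0\notin S_{2j}$. When $j$ is even, I would instead use the expansion: the exponent $2(2i-j)$ vanishes only at $i=j/2$, and I need $\binom{j}{j/2}$ to be even for even $j\ge2$. This follows from Kummer, since adding $j/2$ to itself produces at least one carry because $j/2\ge1$; equivalently, by Lucas, the lowest set bit of $j/2$ shifts into a position where $j$ has a $0$.
\item For $m=2j+1$ with $j\ge1$, the coefficient of $x^0$ in $(x^{-1}+1+x)Q_{2j}$ is the sum of the coefficients of $x^{1},x^0,x^{-1}$ in $Q_{2j}$. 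All exponents of $Q_{2j}$ are even, so only $x^0$ contributes, and that coefficient is $0$ by the even case.
\end{itemize}

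\emph{Halving and base cases.} For $m\ge2$, $S_m$ is symmetric and avoids $0$, so it splits into the disjoint pairs $\{r,-r\}$ with $r>0$. This gives $|S_m^{+}|=\tfrac12|S_m|$. The base cases are read off directly: $Q_0=1$ gives $S_0^{+}=\{0\}$, and $Q_1=x^{-1}+1+x$ gives $S_1^{+}=\{0,1\}$.

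The main obstacle is the even case $m=2j$ with $j$ even, where the congruence alone does not exclude $0$ and the parity of the central binomial coefficient must be invoked. Everything else is formal.
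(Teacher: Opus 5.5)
Your proposal is correct and follows essentially the same route as the paper: symmetry from the invariance of both factors under $x\mapsto x^{-1}$, exclusion of $0$ at even times via the evenness of $\binom{j}{j/2}$ (Kummer), exclusion at odd times because $Q_{2j}$ has only even exponents, and halving by pairing $r$ with $-r$. Your separate handling of odd $j$ through the congruence $S_{2j}\subseteq 2j+4\Z$ is just a more explicit version of the paper's implicit observation that $i=j/2$ must be an integer.
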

	
	\proof Both factors in \eqref{eq:closedform} are invariant under $x\mapsto x^{-1}$, so $Q_m(x)=Q_m(x^{-1})$ and $S_m=-S_m$. For $m=2j$ with $j\geq 1$, membership $0\in S_{2j}$ would require $2(2i-j)=0$ with $\binom{j}{i}$ odd, i.e.\ $i=j/2$ and $\binom{j}{j/2}$ odd; by Kummer's criterion the latter holds only if the addition $j/2+j/2$ produces no carries, which for $j\geq 1$ is impossible. For $m=2j+1$ with $j\geq 1$, every element of $S_{2j+1}$ has the form $c+\delta$ with $c\in S_{2j}$ even and $\delta\in\{-1,0,1\}$, so $0\in S_{2j+1}$ would force $c\in\{-1,0,1\}\cap S_{2j}=\emptyset$. The cardinality statement follows by pairing $r$ with $-r$. \endproof
	
	\begin{figure}
		\centerline{\includegraphics[width=0.9\textwidth]{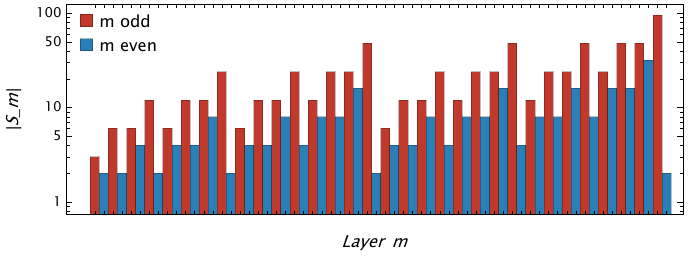}}
		\caption{Full-row support cardinality $|S_m|$ on a logarithmic scale for $m=1,\ldots,64$, as given by Corollary~\ref{cor:cardinality}. Red bars: odd $m$ (factor 3 from the least significant bit). Blue bars: even $m$ (factor 2 only). The pattern resets at each power of~2, where $|S_m|=1$.}
		\label{fig:cardinality}
	\end{figure}

	\section{Recursive Structure of the Support Sets}
	\label{sec:recursion}
	
	The closed form of Theorem~\ref{thm:closedform} can be repackaged as a two-step recursion that cleanly separates the roles of the linear and nonlinear parts of the ANF.
	
	\begin{corollary}
		\label{cor:recursion}
		With $S_0=\{0\}$, the full-row support sets satisfy, for all $m\geq 1$:
		
		(a)~\textbf{Odd step (thickening).} For odd $m$,
		\begin{equation}
			\label{eq:odd}
			S_m = \bigcup_{c\in S_{m-1}}\{c-1,\;c,\;c+1\},
		\end{equation}
		and the union is disjoint.
		
		(b)~\textbf{Even step (decimation).} For $m=2k$ with $k\geq 1$,
		\begin{equation}
			\label{eq:even}
			S_m = 2\cdot\{r\in S_k : r\equiv k\pmod{2}\}.
		\end{equation}
		
		The same two statements hold verbatim with $S_m$ replaced by $S_m^{+}$ for $m\geq 2$.
	\end{corollary}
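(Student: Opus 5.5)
The plan is to derive both parts directly from the closed form of Theorem~\ref{thm:closedform}. The key observation is the Frobenius identity over $\GF_2$, namely $(x+x^{-1})^2=x^2+x^{-2}$.

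\textbf{Part (a).} This is essentially the odd step in the proof of Theorem~\ref{thm:closedform}. For odd $m=2j+1$ we have $Q_{m}=(x^{-1}+1+x)Q_{m-1}$. Because distinct elements of $S_{m-1}$ differ by at least $4$, the translated triples $\{c-1,c,c+1\}$ are pairwise disjoint. Hence no cancellation occurs, and the support of the product is exactly the disjoint union in \eqref{eq:odd}.

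\textbf{Part (b).} Set $R_k(x)=(x+x^{-1})^k\in\GF_2[x,x^{-1}]$. Frobenius gives $Q_{2k}(x)=(x^2+x^{-2})^k=R_k(x^2)$, so $S_{2k}=2\cdot\operatorname{supp}(R_k)$. It therefore suffices to show that $\operatorname{supp}(R_k)=\{r\in S_k: r\equiv k\pmod 2\}$. Every monomial of $R_k$ has exponent of the same parity as $k$, so I only need to show that the parity-$k$ part of $Q_k$ equals $R_k$. I will split on the parity of $k$.

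If $k=2j$, then $Q_k=(x^2+x^{-2})^j=(x+x^{-1})^{2j}=R_k$. All its exponents are even, so the filter removes nothing.

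If $k=2j+1$, then
\[
Q_k=(x^{-1}+1+x)(x+x^{-1})^{2j}=R_k+R_{2j}.
\]
Here $R_k$ has only odd exponents and $R_{2j}$ only even ones, so there is no cancellation between the two summands. Filtering to exponents $\equiv k\equiv 1\pmod 2$ therefore returns exactly $R_k$.

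\textbf{Right-half version.} For (b), multiplication by $2$ preserves sign, so intersecting both sides with $\Z_{\ge0}$ gives the claim immediately. As a check at $k=1$: $S_1^+=\{0,1\}$ filters to $\{1\}$, giving $S_2^+=\{2\}$. For (a) with odd $m\ge 3$, Corollary~\ref{cor:half} gives $0\notin S_{m-1}$, and all elements of $S_{m-1}$ are even, so every $c\in S_{m-1}$ satisfies $|c|\ge 2$. Consequently a triple around $c\le -2$ lies entirely in the negatives, and a triple around $c\ge 2$ lies entirely in the positives. Thus $S_m^+$ is exactly the union of the triples over $c\in S_{m-1}^+$.

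I expect the only delicate point to be the parity-separation step for odd $k$, that is, checking that $R_k$ and $R_{2j}$ cannot cancel. This holds because their exponents have opposite parities, which I will state explicitly. Everything else is bookkeeping from Theorem~\ref{thm:closedform} and Corollary~\ref{cor:half}.
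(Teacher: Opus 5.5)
Your proof is correct, and for part (b) it takes a different route from the paper. The paper splits on the parity of $k$ and handles the two cases by different tools.
\begin{itemize}
\item For $k=2j$ it proves $S_{4j}=2\cdot S_{2j}$ via Lucas' theorem: $\binom{2j}{i}$ is odd exactly when $i=2i'$ with $\binom{j}{i'}$ odd.
\item For $k=2j+1$ it uses part (a) to identify the odd elements of $S_{2j+1}$ as $\{c\pm 1: c\in S_{2j}\}$. It then writes $Q_{4j+2}=(x^2+x^{-2})\,Q_{2j}(x^2)$ and rules out cancellation using the spacing-$4$ invariant of Theorem~\ref{thm:closedform}.
\end{itemize}

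You instead work entirely at the polynomial level. Your single observation is that $Q_{2k}(x)=R_k(x^2)$, and that the parity-$k$ part of $Q_k$ is always $R_k=(x+x^{-1})^k$, since $Q_k=R_k$ for even $k$ and $Q_k=R_k+R_{k-1}$ for odd $k$. Non-cancellation then comes from parity separation of exponents rather than from the spacing invariant.

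What your approach buys is a uniform argument for both parities that needs neither Lucas' theorem nor part (a). It also makes the structure behind the decimation visible: $R_k$ is the single-seed row of the additive Rule~90, and the Rule~22 rows are Rule~90 rows or sums of two consecutive ones. The even step is then just the Frobenius self-similarity $(x+x^{-1})^{2k}=(x^2+x^{-2})^k$.

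What the paper's version buys is that it stays closer to the explicit set description of the supports. It also ties the even case to the same Lucas picture used for Corollary~\ref{cor:cardinality}.

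Your right-half argument for (b) is also more direct than the paper's appeal to reflection symmetry: it rests only on $2r\geq 0$ iff $r\geq 0$.
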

	
	\proof Part~(a) is the odd step established in the proof of Theorem~\ref{thm:closedform}.
	
	For part~(b), write $m=2k$. If $k=2j$ is even, then every element of $S_k$ is even and the congruence condition is vacuous, so the claim is $S_{4j}=2\cdot S_{2j}$. By Theorem~\ref{thm:closedform} the exponents of $Q_{4j}$ are $2(2i-2j)$ with $\binom{2j}{i}$ odd; by Lucas' theorem \cite{lucas1878} $\binom{2j}{i}$ is odd exactly when every binary digit of $i$ is at most the corresponding digit of $2j$, hence exactly when $i=2i'$ with $\binom{j}{i'}$ odd. The exponents are therefore $4(2i'-j)$, i.e.\ twice the exponents of $Q_{2j}$, as claimed.
	
	If $k=2j+1$ is odd, the congruence selects the odd elements of $S_{2j+1}$, which by part~(a) are exactly $c\pm 1$ for $c\in S_{2j}$. Thus the right-hand side of \eqref{eq:even} is $\{2c\pm 2: c\in S_{2j}\}$. On the other side, using the Frobenius identity in characteristic~2,
		\[
		Q_{4j+2}=(x^{2}+x^{-2})^{2j+1}=(x^{2}+x^{-2})\,(x^{4}+x^{-4})^{j}=(x^{2}+x^{-2})\,Q_{2j}(x^{2}),
		\]
	whose exponents are $2c\pm 2$ for $c\in S_{2j}$. Distinct elements of $S_{2j}$ differ by at least~4, so the values $2c$ differ by at least~8 and the sets $\{2c-2,2c+2\}$ are pairwise disjoint; no $\GF_2$ cancellation occurs, and the two descriptions agree as sets.
	
	The right-half statements follow from Corollary~\ref{cor:half}, since for $m\geq 2$ all the operations involved commute with the reflection $r\mapsto -r$ and $0\notin S_m$. \endproof
	
	The odd step is the algebraic fingerprint of the $abc$ term: when three consecutive cells are all active, the parity function $e_1$ would produce cancellation (output~0), but the cubic correction $abc$ flips the result back to~1, effectively filling in between isolated active cells. The even step implements a self-similar decimation analogous to the scaling properties of the Sierpi\'nski gasket \cite{wolfram2002}. Corollary~\ref{cor:recursion} also yields an $O(\log m)$ algorithm for $|S_m|$ and an $O(|S_m|\log m)$ algorithm for $S_m$ itself, by unwinding \eqref{eq:even} along the binary expansion of~$m$.
	
	\begin{remark}
		\label{rem:additive}
		The mechanism behind Theorem~\ref{thm:closedform} is that on configurations whose active cells are pairwise at distance at least~4, the nonlinear monomial $abc$ never fires and Rule~22 acts as an additive rule. The closed form states that the single-seed orbit remains in this regime at every even time. The reduction of a nonlinear rule to an additive one on sparse configurations is a standard observation; what is specific here is that the Rule~22 orbit never leaves the sparse regime, which is what makes the exact evaluation possible.
	\end{remark}

	\section{Generating Polynomials}
	\label{sec:genpoly}
	
	\begin{proposition}
		\label{prop:degree}
		$\deg P_m = m$ for all $m \geq 1$.
	\end{proposition}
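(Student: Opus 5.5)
We read off the largest exponent from the closed form of Theorem~\ref{thm:closedform}. Since $P_m$ is the right-half truncation of $Q_m$, it suffices to show two things: the largest exponent appearing in $Q_m$ is exactly $m$, and this exponent is nonnegative, which is automatic for $m\geq 1$. Then $\deg P_m=\max S_m^{+}=\max S_m=m$.

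\emph{Even times.} Take $m=2j$ with $j\geq 1$. Expanding $(x^2+x^{-2})^j$ over $\GF_2$ gives the exponents $2(2i-j)$ for those $i\in\{0,\dots,j\}$ with $\binom{j}{i}$ odd. The largest candidate is $i=j$, giving exponent $2j$, and $\binom{j}{j}=1$ is odd. Hence the top monomial $x^{2j}$ survives, and $\max S_{2j}=2j=m$.

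\emph{Odd times.} Take $m=2j+1$ with $j\geq 0$. By Corollary~\ref{cor:recursion}(a), equivalently the odd step in the proof of Theorem~\ref{thm:closedform}, $S_{2j+1}$ is the disjoint union of the blocks $\{c-1,c,c+1\}$ over $c\in S_{2j}$. No cancellation occurs, so $\max S_{2j+1}=\max S_{2j}+1=2j+1$. This includes $j=0$, where $S_1=\{-1,0,1\}$. The only point needing care is that the leading term is not cancelled over $\GF_2$. For even $m$ the coefficient is $\binom{j}{j}=1$, and for odd $m$ the disjointness of the blocks rules out cancellation, so I expect no real obstacle.

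Finally, since $m\geq 1$ we have $m\in S_m\cap\Z_{\geq 0}=S_m^{+}$, so $\deg P_m=m$. This is consistent with Corollary~\ref{cor:half}: $P_1=1+x$ has degree $1$, as required.
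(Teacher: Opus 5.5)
Your proof is correct and follows the paper's own argument: the top exponent $2j$ of $Q_{2j}$ survives because $\binom{j}{j}=1$ is odd, the odd step raises it by one, and the maximum $m$ lies in $S_m^{+}$. The only minor difference is that you get $m\in S_m^{+}$ directly from $m\geq 1$ instead of through Corollary~\ref{cor:half}, and that direct observation is all that is needed.
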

	
	\proof By Theorem~\ref{thm:closedform} the largest exponent of $Q_{2j}$ is $2(2j-j)=2j$, attained at $i=j$ since $\binom{j}{j}=1$ is odd; the odd step adds~1, giving $2j+1$. Hence $\max S_m=m$, and by Corollary~\ref{cor:half} this maximum lies in $S_m^{+}$. \endproof
	
	The same count holds for Rule~30 by the light-cone bound, so the degree of $P_m$ does not by itself separate the two rules; the distinction lies in the arithmetic of the coefficients, which for Rule~22 is governed by Theorem~\ref{thm:closedform}. For Mersenne indices, a clean product form emerges:
	\begin{equation}
		\label{eq:mersenne}
		P_{2^n-1}(x) = x(1+x+x^2)\prod_{j=2}^{n-1}\big(1+x^{2^j}\big),
		\qquad n\geq 2 .
	\end{equation}
	Identity \eqref{eq:mersenne} follows from Theorem~\ref{thm:closedform} together with the classical factorization $(1+y)^{2^{n-1}-1}=\prod_{j=0}^{n-2}(1+y)^{2^{j}}$ in $\GF_2[y]$; it has been verified symbolically for $2\leq n\leq 8$.
	
	\begin{figure}
		\centerline{\includegraphics[width=0.62\textwidth]{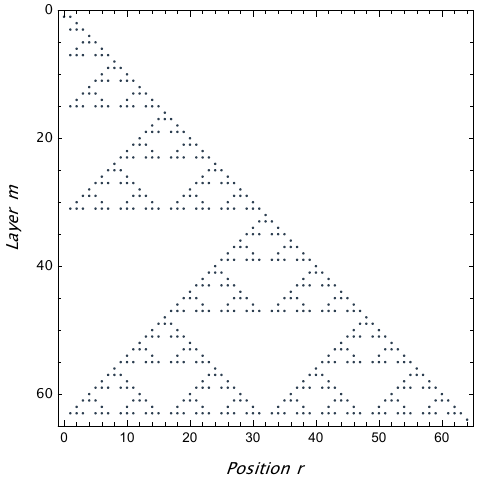}}
		\caption{Right-half support sets $S_m^{+}$ of Rule~22 for $m=1,\ldots,64$. Each mark is a position $r\in S_m^{+}$. The self-similar structure is the geometric content of the decimation step \eqref{eq:even}.}
		\label{fig:support}
	\end{figure}

	\section{Formal Continuum Approximation}
	\label{sec:pde}
	
	The update rule
	\begin{equation}
		\label{eq:update22}
		\eta_{m+1}(i)
		= \eta_m(i-1) \oplus \eta_m(i) \oplus \eta_m(i+1)
		\oplus \eta_m(i-1)\eta_m(i)\eta_m(i+1)
	\end{equation}
	is passed to a continuum equation by the usual formal substitution: the Boolean variables are replaced by a smooth field $u(x,m)$, the exclusive-or by ordinary addition, the products by ordinary products, and the shifts by Taylor expansions in the lattice spacing, which is set to~1. We stress that this is a \emph{formal} continuum approximation and not a rigorous scaling limit: no small parameter is introduced, the range $u\in\{0,1\}$ is relaxed to $\R$, and the truncation is by inspection of derivative order rather than by a controlled estimate. The equation obtained should be read as a structural summary of the ANF, not as an asymptotic theorem.
	
	With $u(x\pm 1)=u\pm u_x+\tfrac12 u_{xx}+O(\partial^3)$ one obtains $u(x{-}1)+u(x)+u(x{+}1)=3u+u_{xx}+O(\partial^4)$ and
	\[
	u(x{-}1)\,u(x)\,u(x{+}1)=u^{3}+u^{2}u_{xx}-u\,u_x^{2}+O(\partial^4).
	\]
	Subtracting $u$ and keeping the leading terms in each structural class yields
	\begin{equation}
		\label{eq:pde}
		\frac{\partial u}{\partial m} = u_{xx}+2u+u^3 .
	\end{equation}
	This is a \textbf{reaction--diffusion equation} of the form $u_m=u_{xx}+f(u)$ with source $f(u)=u(2+u^2)$. Equations of this type arise broadly in mathematical biology and nonlinear wave theory \cite{murray,smoller}.
	
	The salient feature of \eqref{eq:pde} is the absence of the first-order transport term $u_x$, and this absence is exact rather than accidental: the right-hand side of \eqref{eq:update22} is invariant under $i\mapsto -i$, so the formal expansion contains only even-order derivatives at every order. The same computation for Rule~30, whose update is $\eta_{m+1}(i)=\eta_m(i-1)\xor\eta_m(i)\xor\eta_m(i+1)\xor\eta_m(i)\eta_m(i+1)$, gives
	\[
	u(x)u(x{+}1)=u^{2}+u\,u_x+\tfrac12 u\,u_{xx}+O(\partial^3),
	\]
	hence
	\begin{equation}
		\label{eq:pde30}
		\partial_m u - u\,\partial_x u = \big(1+\tfrac12 u\big)\,\partial_{xx}u + 2u+u^{2}.
	\end{equation}
	Writing both rules in the common form $\partial_m u+v(u)\,\partial_x u=\mathcal D(u)\,\partial_{xx}u+\mathcal S(u)$, one has $v\equiv 0$ for Rule~22 and $v(u)=-u\not\equiv 0$ for Rule~30. The structural conclusion---symmetric ANF gives a purely diffusive equation, asymmetric ANF produces a state-dependent transport term---is independent of the particular truncation; the coefficient itself is not, and depends on the particular way the Boolean algebra is relaxed to the reals.
	
	The spatially homogeneous ODE $\dot u=2u+u^3$ associated with \eqref{eq:pde} admits the blow-up solution
	\begin{equation}
		\label{eq:blowup}
		u(m)=\frac{u_0\sqrt{2}\,e^{2m}}{\sqrt{2+u_0^2(1-e^{4m})}}\,,\qquad m^*=\tfrac{1}{4}\ln(1+2/u_0^2).
	\end{equation}
	Setting $u_m=0$ gives the undamped Duffing equation $u''+2u+u^3=0$, integrable via the Jacobi elliptic function $\mathrm{cn}$ \cite{nayfeh_mook}.
	
	\begin{table}
		\centerline{\small\begin{tabular}{|l|c|c|c|c|}
				\hline
				Property & Rule 22 & Rule 30 & Rule 135 & Rule 150 \\
				\hline
				ANF & $a{\xor}b{\xor}c{\xor}abc$ & $a{\xor}b{\xor}c{\xor}bc$ & $1{\xor}a{\xor}bc$ & $a{\xor}b{\xor}c$ \\
				\hline
				Symmetry & $S_3$ & None & None & $S_3$ \\
				\hline
				Quiescent & Yes & Yes & No & Yes \\
				\hline
				$\deg P_m$ & $m$ & $m$ & --- & $m$ \\
				\hline
				Transport $v(u)$ & $0$ & $-u$ & $1-u$ & $0$ \\
				\hline
				$|S_m|$ & $2^{\nu}\cdot 3^{m \bmod 2}$ & Open & --- & $\#\{$odd trinomial coeffs$\}$ \\
				\hline
		\end{tabular}}
		\caption{Structural comparison of ECA rules, with $\nu=\pop(\lfloor m/2\rfloor)$. Symmetry determines whether the formal continuum approximation carries a transport term. Rule~135 is not quiescent ($g(0,0,0)=1$, and the all-zero background is mapped to the all-one fixed point), so the single-seed support sets of Definition~\ref{def:support} are not defined for it and the corresponding entries are left blank; its transport coefficient is computed from the same formal expansion about $u=0$ and is recorded only for comparison. For Rule~150 the row polynomial is $(1+x+x^2)^m$ over $\GF_2$ in the recentered variable.}
		\label{tab:comparison}
	\end{table}

	\section{Symmetry-Breaking Deviation}
	\label{sec:deviation}
	
	Rule~22 is now used as a symmetric reference for Rule~30. Since both rules share the linear part $e_1=a\xor b\xor c$, differing only in the nonlinear correction ($abc$ versus $bc$), the deviation
	\begin{equation}
		\label{eq:deviation}
		\epsilon(m) = |S_m^{(30)}|-|S_m^{(22)}|
	\end{equation}
	isolates the cumulative effect of symmetry breaking. Here both terms are \emph{full-row} cardinalities in the sense of Definition~\ref{def:support}: $|S_m^{(30)}|$ is the number of active cells in row $m$ of Rule~30 from a single seed, and $|S_m^{(22)}|$ is given in closed form by Corollary~\ref{cor:cardinality}. Both rules are quiescent, so both quantities are finite and directly comparable.
	
	\begin{remark}
		\label{rem:mersenne}
		A direct computation for $m\leq 256$ shows that $\epsilon(m)\leq 0$ precisely at the Mersenne indices $m=2^k-1$, with $\epsilon=0$ for $k\leq 3$ and $\epsilon<0$ for $4\leq k\leq 8$. These are exactly the indices at which Corollary~\ref{cor:cardinality} attains its local maxima, $|S_{2^k-1}^{(22)}|=3\cdot 2^{k-1}$, so the sign changes reflect the extremal behavior of the symmetric reference rather than any feature of Rule~30.
	\end{remark}
	
	A log--log regression over the positive values of $\epsilon(m)$ for $m\leq 128$ (121 of the 128 points; the seven exceptions are the Mersenne indices of Remark~\ref{rem:mersenne}) gives
	\begin{equation}
		\label{eq:powerlaw}
		\epsilon(m)\sim m^b,\qquad b\approx 1.11,
	\end{equation}
	empirically consistent with weakly superlinear power-law growth (Figure~\ref{fig:deviation}). This is a numerical fit over a single decade and a half, not a theorem: repeating the regression on the ranges $m\leq 64$ and $m\leq 100$ gives $b\approx 1.11$ and $b\approx 1.16$ respectively, so the exponent should be quoted as $b=1.1\pm 0.05$ over the accessible range. No asymptotic claim is made.
	
	\begin{figure}
		\centerline{\includegraphics[width=0.95\textwidth]{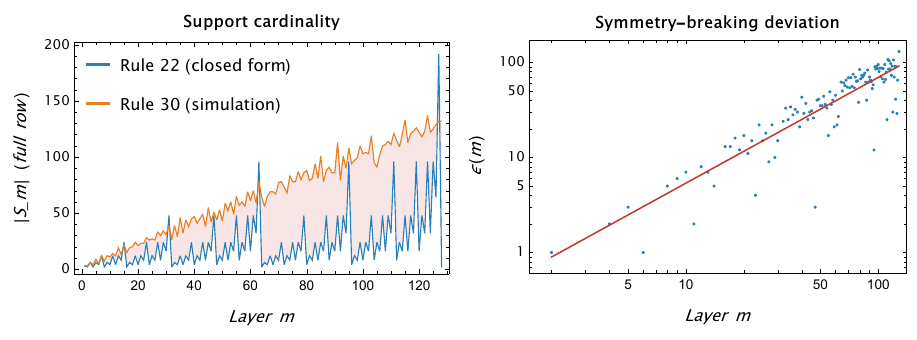}}
		\caption{Left: full-row support cardinalities $|S_m^{(22)}|$ (closed form, Corollary~\ref{cor:cardinality}) and $|S_m^{(30)}|$ (obtained by direct simulation) overlaid. Right: symmetry-breaking deviation $\epsilon(m)$ on log--log axes, restricted to the positive values, with the fitted power law $\epsilon\sim m^{1.11}$.}
		\label{fig:deviation}
	\end{figure}
	
	The continuum picture of Section~\ref{sec:pde} suggests the following heuristic reading. Both rules fit the general form $\partial_m u+v(u)\partial_x u=\mathcal{D}(u)\,\partial_{xx}u+\mathcal{S}(u)$, with $v\equiv 0$ for Rule~22 and $v(u)=-u\not\equiv 0$ for Rule~30. A transport term advects perturbations along characteristics that need not remain parallel, which is the continuum counterpart of the loss of spatial correlation \cite{smoller}. We do not claim that the exponent $b$ is derivable from \eqref{eq:pde30}; the connection is offered as an interpretation of the sign and rough magnitude of $\epsilon$, not as a derivation.

	\section{A Mechanism for Apparent Randomness}
	\label{sec:randomness}
	
	A central open question about Rule~30 is whether its center column is random \cite{wolfram_prizes,wolfram2002}. The left-permutive structure, combined with an asymmetric sensitivity profile, provides a concrete mechanism. Throughout this section we write $c(t)=\eta_t(0)$ for the center-column sequence; in the notation of Definition~\ref{def:support} applied to Rule~30, $c(t)=[\,0\in S_t\,]$.
	
	\subsection{Left-Permutive Decomposition and the XOR Lemma}
	
	Since $g_{30}=a\xor h(b,c)$ with $h(b,c)=b\xor c\xor bc$, the center column satisfies
	\begin{equation}
		\label{eq:xor_decomp}
		\eta_{t+1}(0)=\eta_t(-1)\;\xor\; h\big(\eta_t(0),\,\eta_t(1)\big).
	\end{equation}
	
	\begin{theorem}
		\label{thm:equidistribution}
		Let the initial configuration be i.i.d.\ Bernoulli(1/2). Then $P(\eta_t(0)=1)=1/2$ for all $t\geq 1$ under any left-permutive ECA.
	\end{theorem}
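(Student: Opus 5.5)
The approach is to expand $\eta_t(0)$ down to the initial row and then apply an XOR lemma. Left-permutivity means $g(a,b,c)=a\xor h(b,c)$ for some Boolean $h$; this holds because $a\mapsto g(a,b,c)$ is a bijection of $\{0,1\}$ for each fixed $(b,c)$, so $g(a,b,c)=a\xor g(0,b,c)$. Iterating this decomposition, $\eta_t(0)=\eta_{t-1}(-1)\xor h(\eta_{t-1}(0),\eta_{t-1}(1))$, and $\eta_{t-1}(-1)$ decomposes the same way. The result should be that $\eta_t(0)$ is a function of $\eta_0$ on the window $\{-t,\dots,t\}$ of the form
\[
\eta_t(0)=\eta_0(-t)\;\xor\; F_t\big(\eta_0(-t+1),\dots,\eta_0(t)\big)
\]
for some Boolean function $F_t$.

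I would prove this by induction on $t$. The light-cone property says $\eta_s(i)$ depends only on $\eta_0$ restricted to $\{i-s,\dots,i+s\}$. Applying the induction hypothesis (shifted by translation invariance) to $\eta_{t-1}(-1)$ gives $\eta_0(-t)\xor G(\eta_0(-t+2),\dots,\eta_0(t-2))$. The term $h(\eta_{t-1}(0),\eta_{t-1}(1))$ depends only on $\eta_0$ on $\{-t+1,\dots,t\}$ by the light cone, so it does not involve the leftmost cell $\eta_0(-t)$. Combining the two pieces gives the displayed form, with $F_t$ independent of $\eta_0(-t)$. The key point to check carefully is exactly this: that the leftmost variable appears only through the linear term and never inside $h$.

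Finally, apply the XOR lemma. $X=\eta_0(-t)$ is Bernoulli$(1/2)$, and $Y=F_t(\dots)$ is a function of variables independent of $X$. Conditioning on $Y=y$ gives $P(X\xor y=1)=1/2$, so $P(\eta_t(0)=1)=1/2$. The hypothesis $t\ge 1$ is used only so that the window is nondegenerate; at $t=0$ the statement is trivial anyway.

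The main obstacle is the bookkeeping in the induction: stating the hypothesis cleanly for an arbitrary site $i$, in the form $\eta_s(i)=\eta_0(i-s)\xor F_s(\eta_0(i-s+1),\dots,\eta_0(i+s))$, so that translation invariance applies. No quiescence or symmetry assumption is needed.
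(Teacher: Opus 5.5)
Your proof is correct, but it is organized differently from the paper's main proof.

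The paper stays at the last time step. It takes $\eta_t(-1)\sim\mathrm{Bernoulli}(1/2)$ (by shift-invariance, implicitly by induction) and applies an ``XOR lemma'' to $\eta_{t+1}(0)=\eta_t(-1)\xor h(\eta_t(0),\eta_t(1))$. That lemma asserts $X\xor Y$ is uniform for any $Y$, ``even one dependent on $X$''. As stated it is false (take $Y=X$). Moreover, $\eta_t(-1)$ and $(\eta_t(0),\eta_t(1))$ depend on overlapping initial cells, so the one-step argument needs an independence input it never supplies.

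You sidestep this by unrolling the decomposition down to the initial row. There the extremal cell $\eta_0(-t)$ really is independent of the window $\{-t+1,\dots,t\}$, so conditioning on $Y$ is legitimate. This is essentially the paper's second, combinatorial proof via Proposition~\ref{prop:extremal_influence}: flipping $\eta_0(-t)$ is an involution exchanging the two level sets of $\eta_t(0)$.

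Your decomposition also repairs the main argument. Applied at site $-1$, it gives $\eta_t(-1)=\eta_0(-t-1)\xor(\text{a function of }\eta_0(-t),\dots,\eta_0(t-2))$. Meanwhile $(\eta_t(0),\eta_t(1))$ depends only on $\eta_0(-t),\dots,\eta_0(t+1)$. So $\eta_t(-1)$ is \emph{exactly} independent of $(\eta_t(0),\eta_t(1))$, and the XOR step becomes valid. Incidentally, the mutual information estimated in Section~\ref{sec:randomness} is therefore exactly zero.

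You also correctly derive $g(a,b,c)=a\xor g(0,b,c)$ for an arbitrary left-permutive rule, which the paper only writes down for Rule~30.

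One harmless indexing slip: the function $G$ coming from $\eta_{t-1}(-1)$ depends on $\eta_0(-t+1),\dots,\eta_0(t-2)$, not on a window starting at $-t+2$. Only the exclusion of $\eta_0(-t)$ matters, so the argument is unaffected.
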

	
	\proof By shift-invariance of the initial product measure, $\eta_t(-1)$ is Bernoulli(1/2). By~\eqref{eq:xor_decomp}, $\eta_{t+1}(0)=\eta_t(-1)\xor Y$ where $Y=h(\eta_t(0),\eta_t(1))$. The XOR lemma states that if $X\sim\mathrm{Bernoulli}(1/2)$ then $X\xor Y\sim\mathrm{Bernoulli}(1/2)$ for any $\{0,1\}$-valued $Y$, even one dependent on~$X$. \endproof
	
	Theorem~\ref{thm:equidistribution} concerns random initial conditions only. It says nothing about the single-seed center column, which is a single deterministic sequence; the passage from one to the other is discussed in Section~\ref{sec:conclusions}. Rule~22 is not permutive, so the argument does not apply to it; the bilateral symmetry of its sensitivity profile follows instead from $S_3$-invariance of $g_{22}$.
	
	\subsection{Asymmetric Sensitivity Profile}
	
	The Boolean sensitivity $\Sens_j(\eta_t(0))$ is the probability, over random initial conditions, that flipping cell $j$ at time~0 changes $\eta_t(0)$. Equivalently, it is the influence of the coordinate $j$ on the Boolean function $\{0,1\}^{\{-t,\dots,t\}}\to\{0,1\}$ sending $\eta_0$ to $\eta_t(0)$, and the totals $\sigma_L+\sigma_R$ below are its total influence, in the sense of Kahn, Kalai and Linial \cite{kkl}. Monte Carlo estimates with 5000 trials at $t=5,\ldots,20$ reveal a pronounced asymmetry (Figure~\ref{fig:sensitivity}).
	
	\begin{figure}
		\centerline{\includegraphics[width=0.95\textwidth]{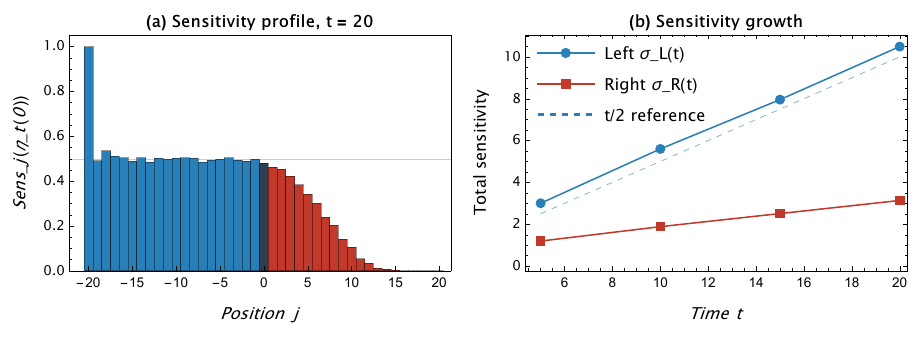}}
		\caption{(a) Estimated sensitivity profile of $\eta_{20}(0)$ for Rule~30 from random initial conditions. Blue bars (left, $j<0$): flat at $\approx 0.5$. Red bars (right, $j>0$): decaying. (b) Growth of total left and right sensitivity over time. Both panels are Monte Carlo estimates, not exact values.}
		\label{fig:sensitivity}
	\end{figure}
	
	One coordinate is exceptional and is determined exactly.
	
	\begin{proposition}
		\label{prop:extremal_influence}
		For any left-permutive ECA and any $t\geq 1$, the leftmost cell of the light cone has influence exactly~1:
		\[
		\Sens_{-t}\big(\eta_t(0)\big)=1 .
		\]
	\end{proposition}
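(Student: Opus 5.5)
The plan is to show that $\eta_t(0)$, viewed as a Boolean function of the initial window $\eta_0(-t),\dots,\eta_0(t)$, has the form
\[
\eta_t(0)=\eta_0(-t)\xor F\big(\eta_0(-t+1),\dots,\eta_0(t)\big)
\]
for some Boolean function $F$ that does not depend on $\eta_0(-t)$. Once this is established, flipping $\eta_0(-t)$ flips $\eta_t(0)$ for \emph{every} initial configuration, not merely with positive probability. The influence is therefore exactly~1, independently of the Bernoulli(1/2) measure.

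The decomposition follows by induction on $t$, using the left-permutive form. A left-permutive ECA has ANF $g(a,b,c)=a\xor h(b,c)$; this is the one-line consequence of Definition~\ref{def:permutive}, with $h(b,c)=g(0,b,c)$, since $g(1,b,c)=1\xor g(0,b,c)$. I would prove the more general statement that for every cell $i$,
\[
\eta_t(i)=\eta_0(i-t)\xor F_{t,i}\big(\eta_0(i-t+1),\dots,\eta_0(i+t)\big),
\]
which is the light-cone property together with linear dependence on the extreme left cell.

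\textbf{Base case} $t=1$. This is exactly $g=a\xor h(b,c)$.

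\textbf{Inductive step.} Write $\eta_{t+1}(i)=\eta_t(i-1)\xor h(\eta_t(i),\eta_t(i+1))$. By the induction hypothesis, $\eta_t(i-1)=\eta_0(i-1-t)\xor F_{t,i-1}(\cdots)$, where $F_{t,i-1}$ depends only on cells $i-t,\dots,i-1+t$. The cells $\eta_t(i)$ and $\eta_t(i+1)$ depend only on initial cells in $[i-t,\,i+1+t]$ by the light cone. None of these three terms involves $\eta_0(i-1-t)$ except the explicit one, so the claim holds at time $t+1$ with leftmost cell $i-(t+1)$.

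The only point needing care is the bookkeeping in the inductive step: one must check that $\eta_0(i-t-1)$ does not enter through $h$. This is immediate because the light cone of $\eta_t(i)$ and $\eta_t(i+1)$ starts at $i-t$. No probabilistic estimate is needed; quiescence and symmetry play no role.
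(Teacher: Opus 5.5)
Your proof is correct and follows essentially the same route as the paper: induction on $t$ using $\eta_{t+1}(0)=\eta_t(-1)\xor h(\eta_t(0),\eta_t(1))$, together with the light-cone observation that $\eta_t(0)$ and $\eta_t(1)$ do not depend on $\eta_0(-t-1)$. The differences are cosmetic. You carry the statement at every site $i$ instead of invoking shift-invariance. You also derive $g=a\xor h(b,c)$ with $h(b,c)=g(0,b,c)$ directly from the definition of left-permutivity, whereas the paper uses this decomposition implicitly through its Rule~30 formula.
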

	
	\proof Induction on~$t$. For $t=1$, equation~\eqref{eq:xor_decomp} reads
	\[
	\eta_1(0)=\eta_0(-1)\xor h\big(\eta_0(0),\eta_0(1)\big),
	\]
	whose second summand does not involve $\eta_0(-1)$; flipping that cell therefore flips $\eta_1(0)$. Assume the statement at time~$t$. By shift-invariance it holds at every site, so flipping $\eta_0(-t-1)$ flips $\eta_t(-1)$. In $\eta_{t+1}(0)=\eta_t(-1)\xor h(\eta_t(0),\eta_t(1))$ the arguments $\eta_t(0)$ and $\eta_t(1)$ are determined by $\eta_0$ on $\{-t,\dots,t\}$ and $\{-t+1,\dots,t+1\}$ respectively, neither of which contains $-t-1$. Hence $\eta_{t+1}(0)$ is flipped as well. \endproof
	
	Proposition~\ref{prop:extremal_influence} also gives a second, purely combinatorial proof of Theorem~\ref{thm:equidistribution}: flipping the coordinate $-t$ is an involution of $\{0,1\}^{\{-t,\dots,t\}}$ that exchanges the two level sets of $\eta_t(0)$, so the function is balanced.
	
	Away from that extreme coordinate the left sensitivity is flat at $\Sens_j\approx 1/2$ for $-t<j\leq 0$, while the right sensitivity decays, with measured asymmetry ratio $\sigma_L/\sigma_R\approx 3.3$ at $t=20$. The flat left profile is a consequence of left-permutivity: each left cell enters the computation through the XOR in \eqref{eq:xor_decomp}. The right decay reflects the conditional cancellation $g_{30}(a,1,c)=a\xor 1$, which is independent of~$c$: when the center cell is~1, the right neighbor has no effect.
	
	It is worth recording what this profile is \emph{not}. Since $\eta_t(0)$ is balanced, the theorem of Kahn, Kalai and Linial \cite{kkl} applies and guarantees a coordinate of influence at least $c\log n/n$ with $n=2t+1$ and $c$ an absolute constant. Both Proposition~\ref{prop:extremal_influence} and the measured left profile exceed that bound by a wide margin, so the structure visible in Figure~\ref{fig:sensitivity} is not the generic behavior forced on any balanced Boolean function; it is a specific consequence of permutivity, and it is the asymmetry between the two halves, not the size of the influences, that distinguishes Rule~30 from Rule~22.
	
	\subsection{Connection to the Transport Term}
	
	The asymmetric sensitivity profile is the discrete counterpart of the transport term $v(u)\partial_x u$ in the formal continuum approximation. For Rule~22 (symmetric) both left and right profiles are flat at $\approx 1/2$; for Rule~30 (asymmetric) left perturbations dominate. The measured mutual information
	\[
	I\big(\eta_{20}(-1);(\eta_{20}(0),\eta_{20}(1))\big)\approx 2\times 10^{-5}\;\text{bits}
	\]
	(from $10^5$ trials) is consistent with approximate independence at this sample size, which combined with \eqref{eq:xor_decomp} suggests near-maximal conditional entropy,
	\begin{equation}
		\label{eq:condent}
		H\big(\eta_{t+1}(0)\mid\eta_t(0),\ldots,\eta_1(0)\big)\approx 1\text{ bit}.
	\end{equation}
	Both \eqref{eq:condent} and the mutual-information estimate are empirical; a vanishing sample estimate bounds the true mutual information only up to the estimator's resolution, and no independence claim is made.
	
	Block statistics of the single-seed center column ($N=4096$ steps) give $H_n/n>0.99$ for $n\leq 9$ and full block complexity $p(n)=2^n$ for $n\leq 9$, with $p(n)/2^n$ falling to $0.99$, $0.87$ and $0.63$ at $n=10,11,12$ respectively (Figure~\ref{fig:entropy}). These are finite-sample statistics of one deterministic sequence: at $N=4096$ the counts for $n\geq 12$ are already sparse, so the decay at large $n$ is a sample-size effect and not evidence of structure.
	
	\begin{figure}
		\centerline{\includegraphics[width=0.95\textwidth]{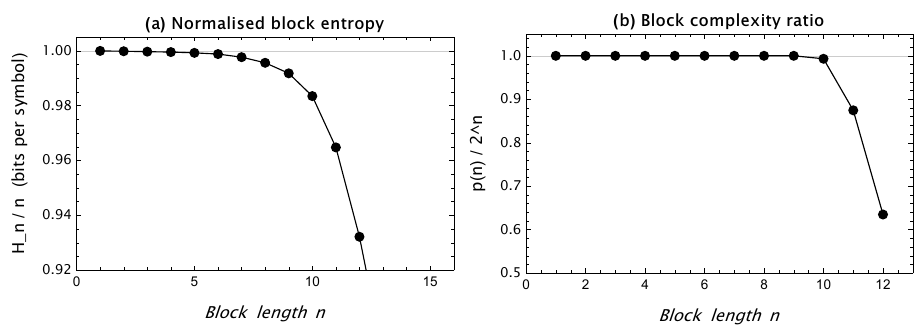}}
		\caption{Statistical properties of Rule~30's center column from a single seed ($N=4096$). (a) Normalized block entropy $H_n/n$ remains near the maximal value~1 for $n\leq 9$. (b) Block complexity ratio $p(n)/2^n$ equals~1 for $n\leq 9$ and decreases thereafter, consistently with the finite sample length.}
		\label{fig:entropy}
	\end{figure}

	\section{Conclusions and Open Problems}
	\label{sec:conclusions}
	
	This paper establishes Rule~22 as a symmetric algebraic reference for Rule~30. The central result is the closed form \eqref{eq:closedform} for the single-seed orbit of Rule~22, proved in Theorem~\ref{thm:closedform}, from which three consequences follow rigorously: the cardinality formula $|S_m|=2^{\pop(\lfloor m/2\rfloor)}\cdot 3^{m\bmod 2}$ (Corollary~\ref{cor:cardinality}), the two-step recursive construction (Corollary~\ref{cor:recursion}), and the linear degree growth $\deg P_m=m$ (Proposition~\ref{prop:degree}). To these are added a formal parabolic continuum approximation $u_m=u_{xx}+2u+u^3$, an empirical symmetry-breaking exponent, and a qualitative randomness mechanism based on the asymmetric sensitivity profile and the XOR decomposition.
	
	These results bear on the three long-standing questions about Rule~30 as follows. On the apparent randomness of the center column, the left-permutive decomposition identifies an algebraic mechanism---the left neighbor enters through an XOR at every step---and Theorem~\ref{thm:equidistribution} makes this precise for random initial conditions; extending it to the single-seed case would require quantifying a mixing rate, which we do not do. On non-periodicity, the center column is $c(t)=[\,0\in S_t\,]$, so eventual periodicity of $c$ is a statement about the sequence of Rule~30 support sets; left-permutivity constrains how periodicity could propagate, but we obtain no non-periodicity result. On computational compression, Corollary~\ref{cor:recursion} shows that $O(\log m)$ evaluation of $|S_m|$ is available when the symmetry is present; whether analogous compression survives for the asymmetric Rule~30 remains tied to the question of computational irreducibility \cite{wolfram2002}.
	
	Several concrete questions remain open. Does $\epsilon(m)$ admit an exact recursion, or an asymptotic description finer than the empirical power law \eqref{eq:powerlaw}? Is the sign pattern of Remark~\ref{rem:mersenne} exact for all $k$? More broadly, the 12 nonlinear $S_3$-symmetric ECA identified in Proposition~\ref{prop:sym_perm_dichotomy} form a natural laboratory for these tools: the argument of Theorem~\ref{thm:closedform} uses only the sparsity invariant $S_{2j}\subseteq 2j+4\Z$ and the vanishing of the nonlinear monomial on separated configurations, and it would be of interest to determine for which of these rules an analogous invariant holds.
	
	These results suggest that symmetry may act as a unifying structural principle governing both algebraic tractability and emergent randomness in cellular automata. In this view, apparent complexity arises not from a lack of underlying rules, but rather from the breaking of symmetries that would otherwise enforce structural regularity.
	
	\section{Supplementary Wolfram Language Material}
	
	The computational material supporting this work is provided as a
	Wolfram Language script,
	\texttt{Rule22Rule30\_Supplement.wl}. The script is designed to
	reproduce the computational results and figures reported in this
	paper, including the Rule~22 support-set calculations, the closed-form
	verification, the recursive constructions, the comparison with
	Rule~30, the symmetry-breaking deviation, and the numerical analyses
	of sensitivity, mutual information, and block statistics.
	
	The script can be executed directly from a Wolfram Language notebook
	using \texttt{Get}. For example, if the notebook and the supplementary
	script are located in the same project directory, the script can be
	invoked by
	
	\begin{verbatim}
		Get["Rule22Rule30_Supplement.wl"]
	\end{verbatim}
	
	or equivalently by
	
	\begin{verbatim}
		<< Rule22Rule30_Supplement.wl
	\end{verbatim}
	
	The script determines its output directory from the location of the
	script itself through
	
	\begin{verbatim}
		outputDirectory = If[SameQ[$InputFileName, ""],
		Directory[],
		DirectoryName @ $InputFileName
		];
	\end{verbatim}
	
	When the script is loaded with \texttt{Get}, the value of
	\texttt{\$InputFileName} identifies the script being evaluated. The
	exported figures and computational outputs are then written to the
	directory containing the script. If the code is evaluated directly
	from a notebook, \texttt{\$InputFileName} is empty and
	\texttt{Directory[]} is used instead.
	
	This design makes the computational material independent of the
	location from which the notebook is launched and avoids hard-coded
	absolute paths. The generated outputs can therefore be reproduced by
	placing the supplementary script in a project directory and invoking
	it with \texttt{Get}. The script does not depend on the previously
	generated manuscript figures; all quantities used to produce the
	figures are recomputed from the definitions and evolution rules given
	in the present work.
	
	\section*{Acknowledgments}
	The authors express their gratitude to \href{https://community.wolfram.com/web/tigrannersissian}{Tigran Nersissian} and the user \href{https://mathematica.stackexchange.com/users/9469/yarchik}{yarchik} for valuable discussions and algebraic insights on the Wolfram Community and Mathematica Stack Exchange platforms, which informed and helped motivate the symmetry--based perspective developed in this work. The authors also thank Matthijs Ruijgrok (Mathematics Department, Utrecht University) for valuable insights concerning the classification of Rule 22.


\begin{thebibliography}{99}
		
		\bibitem{wolfram2002}
		S. Wolfram, \textit{A New Kind of Science}, Champaign, IL: Wolfram Media, 2002.
		
		\bibitem{cook2004}
		M. Cook, ``Universality in Elementary Cellular Automata,'' \textit{Complex Systems}, \textbf{15}(1), 2004 pp.~1--40.
		
		\bibitem{wolfram1984stat}
		S. Wolfram, ``Statistical Mechanics of Cellular Automata,'' \textit{Reviews of Modern Physics}, \textbf{55}(3), 1983 pp.~601--644. doi:10.1103/RevModPhys.55.601.
		
		\bibitem{wolfram1984alg}
		S. Wolfram, ``Algebraic Properties of Cellular Automata,'' \textit{Communications in Mathematical Physics}, \textbf{93}(2), 1984 pp.~219--258. doi:10.1007/BF01223745.
		
		\bibitem{wolfram_prizes}
		S. Wolfram. ``The Wolfram Rule~30 Prizes.'' (Jun 1, 2019)\\
		\url{https://rule30prize.org/}.
		
		\bibitem{mse2026}
		T. Nersissian. ``Rule~30: Finding a Closed Formula for the $S_m$ Subset Recurrence'' from Mathematica Stack Exchange. Question 318912 (2026).\\
		\url{https://mathematica.stackexchange.com/questions/318912/}.
		
		\bibitem{chan2026}
		E. Chan-L\'opez. Answer to ``Rule~30: Finding a Closed Formula for the $S_m$ Subset Recurrence'' from Mathematica Stack Exchange. Question 318912 (2026).\\
		\url{https://mathematica.stackexchange.com/questions/318912/}.
		
		\bibitem{tigran2026}
		T. Nersissian, ``Rule 30 exact binomial--Lucas lifting (part II): generating polynomials, continuous PDE limits, and symmetry classification of elementary cellular automata'', Wolfram Community, 2026.
		\url{https://community.wolfram.com/groups/-/m/t/3671492}
		
		\bibitem{crama_hammer}
		Y. Crama and P. L. Hammer, \textit{Boolean Functions: Theory, Algorithms, and Applications}, Cambridge: Cambridge University Press, 2011. doi:10.1017/CBO9780511852008.
		
		\bibitem{lind_marcus}
		D. Lind and B. Marcus, \textit{An Introduction to Symbolic Dynamics and Coding}, 2nd~ed., Cambridge: Cambridge University Press, 2021. doi:10.1017/9781108899727.
		
		\bibitem{lucas1878}
		\'E. Lucas, ``Th\'eorie des Fonctions Num\'eriques Simplement P\'eriodiques,'' \textit{American Journal of Mathematics}, \textbf{1}(2), 1878 pp.~184--196. doi:10.2307/2369308.
		
		\bibitem{smoller}
		J. Smoller, \textit{Shock Waves and Reaction--Diffusion Equations}, 2nd~ed., New York: Springer-Verlag, 1994. doi:10.1007/978-1-4612-0873-0.
		
		\bibitem{murray}
		J. D. Murray, \textit{Mathematical Biology I: An Introduction}, 3rd~ed., New York: Springer, 2002. doi:10.1007/b98868.
		
		\bibitem{nayfeh_mook}
		A. H. Nayfeh and D. T. Mook, \textit{Nonlinear Oscillations}, New York: Wiley, 1979.
		
		\bibitem{kkl}
		J. Kahn, G. Kalai, and N. Linial, ``The Influence of Variables on Boolean Functions,'' in \textit{Proceedings of the 29th Annual Symposium on Foundations of Computer Science (FOCS~1988)}, White Plains, NY, Washington, DC: IEEE, 1988 pp.~68--80. doi:10.1109/SFCS.1988.21923.
		
	\end{thebibliography}
\end{document}